\def\BibTeX{{\rm B\kern-.05em{\sc i\kern-.025em b}\kern-.08em
    T\kern-.1667em\lower.7ex\hbox{E}\kern-.125emX}}
\newtheorem{theorem}{Theorem}
\newtheorem{corollary}{Corollary}
\def\ps@IEEEtitlepagestyle{%
  \def\@oddfoot{\mycopyrightnotice}%
  \def\@evenfoot{}%
  }
\def\mycopyrightnotice{%
  {\footnotesize 978-1-7281-0858-2/19/\$31.00 © 2019 IEEE \hfill}%
  \gdef\mycopyrightnotice{}%
  }
\begin{document}

\title{Data Replication for Reducing Computing Time in Distributed Systems with Stragglers
}

\author{\IEEEauthorblockN{Amir Behrouzi-Far and Emina Soljanin}
\IEEEauthorblockA{\textit{Department of Electrical and Computer Engineering, Rutgers University} \\
%\textit{Rutgers University}\\
Piscataway, New Jersey 08854, USA \\
\{amir.behrouzifar,emina.soljanin\}@rutgers.edu}
}

\maketitle

\begin{abstract}
In distributed computing systems with stragglers, various forms of redundancy can improve the average delay performance. We study the optimal replication of data in systems where the job execution time is a stochastically decreasing and convex random variable. We show that in such systems, the optimum assignment policy is the balanced replication of disjoint batches of data. Furthermore, for Exponential and Shifted-Exponential service times, we derive the optimum redundancy levels for minimizing both expected value and the variance of the job completion time. Our analysis shows that, the optimum redundancy level may not be the same for the two metrics, thus there is a trade-off between reducing the expected value of the completion time and reducing its variance.
\end{abstract}

\section{Introduction}

Distributed computing has received a great attention for several reasons, including big data processing \cite{dean2004mapreduce} where parallelization is essential. Implementing computing algorithms in distributed systems introduces new challenges that have to be addressed in order to benefit from the paralleziation. In particular, since the failure rate and/or slowdown of a system increase with the number of computing nodes, robustness is an essential part of any reliable distributed computing/storage algorithm \cite{dean2013tail}.
For achieving robustness, redundancy is proposed in the literature \cite{brun2011smart}. Redundancy enables a task master to generate the overall result form the computations of only a subset of all computing nodes, instead of all of them. Thus, slow workers, known as stragglers, can be mitigated \cite{behrouzi2018effect}.

In spite of a growing body of work on redundancy in distributed systems, we are still far from fully understanding its benefits and costs. In particular stragglers, that are the consequence of e.g., resource contention, network congestion, Input/Output (I/O) operations, could be exacerbated by a non-careful redundancy planning \cite{aktas2019straggler}. Although understanding the exact benefits and costs requires detailed knowledge about both the system and the computing job, many studies have been devoted to performance evaluation of these systems under reasonable modeling, see
\cite{aktas2019learning} and references therein.

In this work, we consider a distributed system with the master-worker architecture. A computing job is an executable file which has to operate over a possibly large data set. We assume that the executable can be concurrently run on different workers, each hosting a subset of the original data set. This model is well applicable to a wide range of problems, e.g model training in supervised machine learning \cite{tandon2016gradient}. We study the completion time in the aforementioned system, defined by the total waiting time for the master to generate the overall results from a subset of local compute results. We study two general methods for data replication across workers: overlapping batches and non-overlapping batches. With overlapping batches, the data subset at each worker either completely overlaps or does not overlap at all with the subsets at other users. With overlapping batches, on the other hand, the data set at any worker has partial overlaps with subsets at other workers. We first showed that a balanced distribution of non-overlapping batches results in minimum expected completion time, if the service time distribution of workers is stochastically decreasing and convex random variable. We then studied the effect of the degree of redundancy on the completion time, for two distributions of service time of data samples: Exponential and Shifted Exponential. Our analysis show that, with Exponential service time distribution,  maximum diversity, i.e. replicating all the data at every worker, minimizes the completion time. On the other hand, with Shifted-Exponential service time distribution, neither full diversity nor full parallelism, i.e., dividing the data evenly among workers, is optimal and the optimum point depends on the distribution parameters.

\section{System model and Problem Statement}
We study a distributed computing system with a task-master and $N$ worker nodes, as given in Fig.~\ref{fig:sysModel1}. 
\begin{figure}[htbp]
   \centering
   \includegraphics[width=\columnwidth, keepaspectratio]{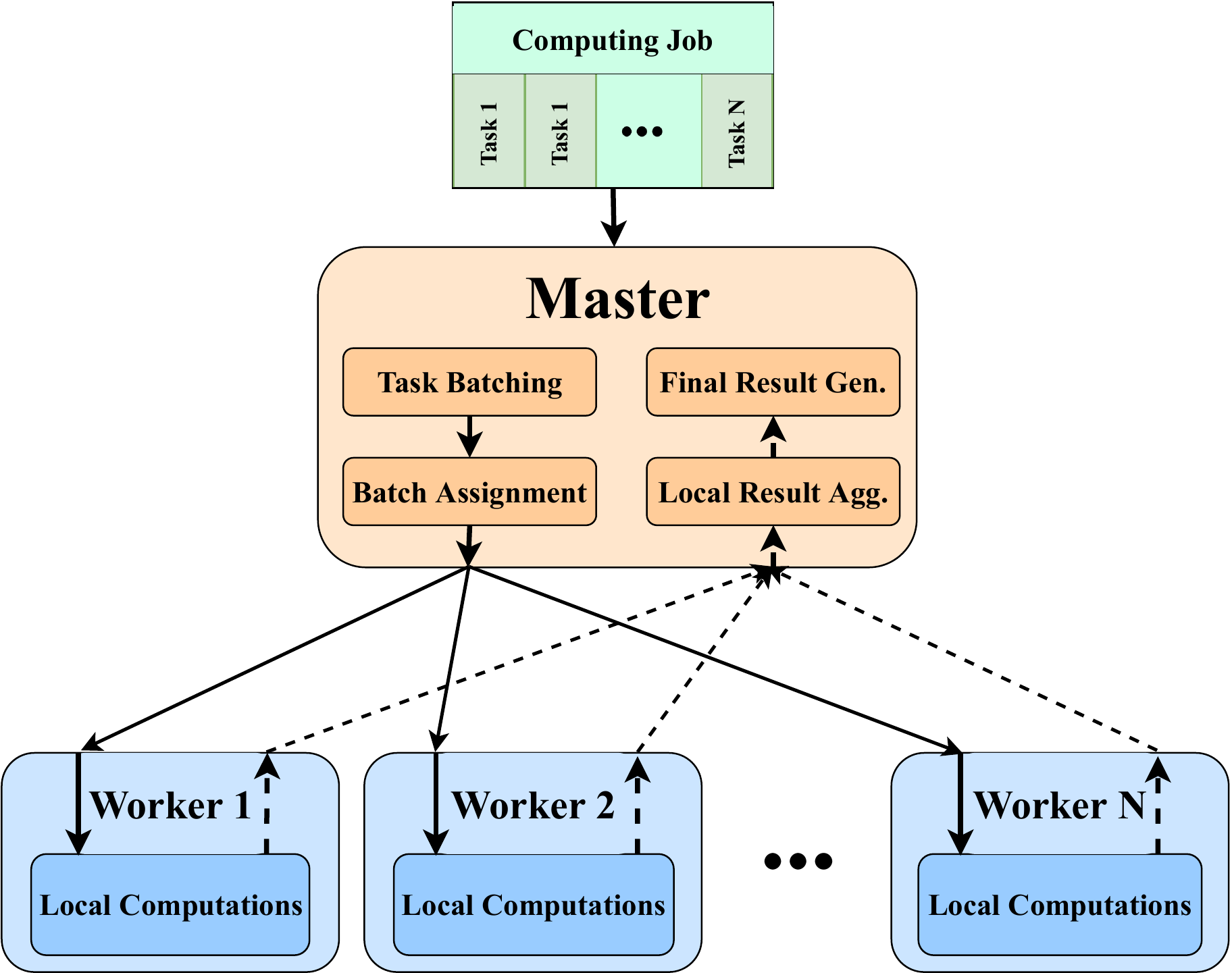}
   \caption{System model.}
   \label{fig:sysModel1}
\end{figure}
We refer to this system as System\ref{fig:sysModel1}. We assume that  computing jobs that are parallelizable to $N$ tasks. The master node hosts a batching unit, which puts data samples into $B$ batches, and a batch assignment unit, which allocates data batches to $N$ workers. There is also an aggregation unit in the master node, responsible for collecting local computations from worker nodes and sending them to result generation unit, which computes the overall result.

Data batches are redundantly assigned to multiple workers. Each worker sends its compute results to the master, only after finishing the execution of its compute job over the assigned data batch. Note that, sending the compute results at-once reduces the number of communications rounds. After receiving the compute results from a large enough number of workers, the master generates the overall result. This is a well applicable model to several applications, e.g., matrix multiplication \cite{choi1998new}, gradient based optimizers \cite{tandon2016gradient,boyd2011distributed} and model training in machine learning problems \cite{tandon2016gradient}. As a an example of this computing model, consider the data set $\mathbb{D}=\{X_i\}_{i=1}^{|\mathbb{D}|}$, where $X_i$ is the $i$th data sample in the set, and suppose we are interested in the sum of the respective evaluations of some function $f(.)$ over the points in the data set $\mathbb{D}$, i.e.,
\begin{equation*}
f(\mathbb{D})=\sum_{X_i\in\mathbb{D}}f(X_i).  
\end{equation*}
This summation could be split into $N$ sums (one for each worker):
\begin{equation}
f(\mathbb{D})=\sum_{X_i\in\CMcal{D}_1}f(X_i)+\sum_{X_i\in\CMcal{D}_2}f(X_i)+\dots+\sum_{X_i\in\CMcal{D}_N}f(X_i),
\label{computing model}
\end{equation}
where $\CMcal{D}_i$, $i=1,2,\dots,N$, is the $i$th partition of the data set $\mathbb{D}$. Now, if we assign the partition $\CMcal{D}_i$ to worker $i$ in System\ref{fig:sysModel1} and let each worker evaluate the function $f(.)$ on the data in its local subset followed by a summation over all evaluations, then the master node will be able to generate $f(\mathbb{D})$ by simply adding the local computations' results.

We model the redundant distribution of data among workers with a two-stage process. In the first stage, data samples get (redundantly) stored in equal-sized \textit{batches} and in the second stage batches get (redundantly) assigned to the workers. Batches overlap if each data sample is available in more than one batch. We assume the same batch size for the non-overlapping and the overlapping case, equal to $N/B$, where $B|N$, the number of batches is an integer in the range of $[B,N]$.

We define $T_{ij}$ to be the service time of worker $j$ on batch $i$. We assume that $T_{ij}$s are i.i.d random variables. We define the completion time of System\ref{fig:sysModel1} as the time window between the start of the computations at workers and the overall result generation at the master. We study the completion time under two class of distributions for $T_{ij}$, defined as follows.
\subsubsection{Exponential Distribution}
If $T_{ij} \sim \textup{Exp}(\mu)$ then,
    \begin{equation}
        \textup{Pr}\{T_{ij}>t\}=\mathbbm{1}(t\geq 0)\textup{e}^{-\mu t}
    \end{equation}
where $\mu$ is the service rate and is identical across the workers, and $\mathbbm{1}(.)$ is the indicator function.
\subsubsection{Shifted-Exponential Distribution}
If $T_{ij} \sim \textup{SExp}(\mu,\Delta)$ then,
    \begin{equation}
        \textup{Pr}\{T_{ij}>t\}=1-\mathbbm{1}(t\geq \Delta)\left[1-\textup{e}^{-\mu(t-\Delta)}\right]
    \end{equation}
where $\mu$ is the service rate and $\Delta$ is the shift parameter, defined as the minimum possible service time. We assume $\mu$ and $\Delta$ are identical across all the workers.

\section{Summary of Results}
In this section, we will provide the results of our analysis. We skip the proof of the theorems because of space limitations.
\begin{theorem}
With Exponential service time distribution, $T_{ij}\sim Exp(\mu)$, among all batch assignment policies, the balanced assignment of non-overlapping batches achieves the minimum expected job completion time.
\label{mainThm}
\end{theorem}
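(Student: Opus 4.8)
The plan is to reduce the expected completion time of an arbitrary policy to an extreme-value functional and then peel off the optimization in two stages: first balancing the redundancy across disjoint batches, then eliminating overlap. First I would fix notation. If batch $i$ is replicated to a worker set $S_i$, it is finished once its fastest worker completes, so its completion time is $C_i=\min_{j\in S_i}T_{ij}$, and the master can output $f(\mathbb{D})$ exactly when the finished workers cover the data in a reconstructible way. For the balanced non-overlapping policy with $B$ disjoint batches and $|S_i|=r=N/B$, the sets $S_i$ are disjoint, so the $C_i$ are independent; by memorylessness each $C_i\sim\mathrm{Exp}(r\mu)$, giving the target value $E[T_{\text{job}}]=\frac{1}{r\mu}\sum_{k=1}^{B}\frac1k$.

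The first stage keeps the batches disjoint but allows an arbitrary, possibly unbalanced, replication vector $(k_1,\dots,k_B)$ with $\sum_i k_i=N$. Since the $C_i$ are independent, writing $\bar F$ for the service-time survival function,
\begin{equation*}
E[T_{\text{job}}]=\int_0^\infty\Bigl(1-\prod_{i=1}^{B}\bigl(1-\bar F(t)^{k_i}\bigr)\Bigr)\,dt .
\end{equation*}
For fixed $t$ set $u=\bar F(t)\in(0,1)$; a short computation shows $k\mapsto\log(1-u^{k})$ is concave, so $\sum_i\log(1-u^{k_i})$ is Schur-concave and hence maximized, subject to $\sum_i k_i=N$, at the balanced point $k_i\equiv N/B$, which is integral because $B\mid N$. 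Maximizing the product pointwise in $t$ minimizes the integrand, and therefore the integral, so balanced replication dominates every unbalanced one. This argument is distribution-free: it uses only $\bar F(t)\in[0,1]$.

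The second stage, which I expect to be the main obstacle, is to show that no overlapping policy beats the best non-overlapping one. The difficulty is that overlap acts in two opposing directions: shared workers make the $C_i$ positively dependent, which on its own would \emph{help} (by association, $\Pr[\max_i C_i\le t]$ increases), but partial overlap makes the reconstruction condition strictly harder, since the master now needs a finished sub-collection whose subsets \emph{tile} $\mathbb{D}$ rather than merely cover it. I would formalize a de-overlapping map that replaces any overlapping assignment by a disjoint one of batch size $N/B$ on the same $N$ workers, and argue by coupling that the completion time can only decrease: relaxing ``exact tiling'' to ``coverage'' removes the binding constraint, while the convexity of the redundancy-to-speed map ensures that reallocating the redundancy freed from the overlapped region into disjoint batches raises no $C_i$ stochastically. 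Combining the two stages places the minimizer in the balanced non-overlapping family, and specializing to $\bar F(t)=e^{-\mu t}$ recovers the harmonic-sum value, completing the Exponential case.
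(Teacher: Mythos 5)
Your first stage is correct and self-contained: for disjoint batches with replication vector $(k_1,\dots,k_B)$, the identity $E[T_{\text{job}}]=\int_0^\infty\bigl(1-\prod_{i}(1-\bar F(t)^{k_i})\bigr)\,dt$ holds by independence, the concavity of $k\mapsto\log(1-u^k)$ for $u\in(0,1)$ is easily verified (its second derivative is $-(\ln u)^2u^k/(1-u^k)^2<0$), and Schur-concavity then places the pointwise maximizer of the product at the balanced vector, which is integral because $B\mid N$. That cleanly disposes of unbalanced replication among non-overlapping batches, and it is indeed distribution-free.

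The second stage, however, is where the theorem actually lives, and what you have written there is a plan rather than a proof. You correctly identify that overlap produces two opposing effects --- positive dependence among the $C_i$ (which, by association, tends to \emph{reduce} $E[\max_i C_i]$) versus a strictly harder reconstruction condition --- but you never resolve which effect wins; you only assert that a ``de-overlapping map'' exists and that ``the completion time can only decrease'' under it. No coupling is constructed, the ``convexity of the redundancy-to-speed map'' is invoked without being defined or tied to the Exponential distribution, and the reconstruction condition itself is not pinned down (in the partial-sum model the master needs the all-ones vector in the span of the finished workers' data-indicator vectors, which is neither exactly ``tiling'' nor exactly ``coverage''). Since the positive-dependence effect genuinely points the wrong way for you, the burden is to show quantitatively that the reconstruction penalty dominates, and that step is missing. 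For what it is worth, the paper does not prove this in-line either: it defers to its reference, where the comparison is carried out using the fact that the service time is a stochastically decreasing and convex random variable --- a stochastic-ordering toolkit quite different from, and more structured than, the coupling you gesture at. As it stands, your proposal proves optimality only within the non-overlapping family.
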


\begin{proof}
See \cite{behrouzi2018effect}.
\end{proof}

With a given batch size, the number of non-overlapping batches is fixed. Given that $B|N$, it is possible to assign every batch of data to an equal number of workers, which minimizes the average computing time. Moreover, for the given system model, with overlapping batches System\ref{fig:sysModel1} always has higher average completion time compared to balanced assignment of non-overlapping batches. Interested reader may find further discussions in \cite{behrouzi2018effect}.

\begin{corollary}
With Shifted-Exponential sevice time, $T_{ij}\sim \textup{SExp}(\Delta,\mu)$, the minimum expected job completion time is achieved by the balanced assignment of non-overlapping batches. 
\end{corollary}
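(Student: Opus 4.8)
The plan is to reduce the Shifted-Exponential case to the Exponential case already settled in Theorem~\ref{mainThm} by a coupling (shift-equivariance) argument, rather than re-running the full optimization. The starting observation is that a Shifted-Exponential service time decomposes as $T_{ij}=\Delta+E_{ij}$, where $E_{ij}\sim\textup{Exp}(\mu)$ and the deterministic shift $\Delta$ is common to every worker and every batch. I would couple the two systems by using the same exponential variables $E_{ij}$ to drive both: an Exponential system with service times $E_{ij}$ and a Shifted-Exponential system with service times $\Delta+E_{ij}$, under an arbitrary but fixed assignment policy $\pi$.

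The key step is to show that the completion-time functional is shift-equivariant, i.e.\ that adding the common constant $\Delta$ to all service times adds exactly $\Delta$ to the completion time of \emph{every} policy. To see this, note that the master's completion criterion depends only on which workers have reported by a given time, and that this set is monotone: worker $j$ (assigned batch $i$) has finished by time $t$ in the shifted system precisely when $\Delta+E_{ij}\le t$, i.e.\ when $E_{ij}\le t-\Delta$, which is exactly the event that $j$ has finished by time $t-\Delta$ in the exponential system. Hence the set of finished workers at time $t$ in the shifted system equals the set at time $t-\Delta$ in the exponential system, so the first time the criterion is met shifts by exactly $\Delta$. Writing $C_\pi$ for the completion time, this gives $C_\pi^{\textup{SExp}}=\Delta+C_\pi^{\textup{Exp}}$ pathwise, and therefore $\mathbb{E}[C_\pi^{\textup{SExp}}]=\Delta+\mathbb{E}[C_\pi^{\textup{Exp}}]$ for every $\pi$.

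Since the additive constant $\Delta$ does not depend on the policy, the minimizer of $\mathbb{E}[C_\pi^{\textup{Exp}}]$ coincides with the minimizer of $\mathbb{E}[C_\pi^{\textup{SExp}}]$. Invoking Theorem~\ref{mainThm}, which identifies the balanced assignment of non-overlapping batches as the minimizer in the Exponential case, the same policy is optimal in the Shifted-Exponential case, which proves the corollary.

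I expect the main obstacle to be the rigorous justification that the completion-time functional truly depends only on the \emph{set} of workers that have reported and is monotone in that set, uniformly over all admissible (including overlapping) policies; once this structural property is in hand, the shift-equivariance and the expectation identity are immediate. A secondary point worth flagging is that the common-shift assumption (identical $\Delta$ across workers) is essential: were the shifts heterogeneous, the clean pathwise identity would fail and one would instead have to fall back on the general stochastic-ordering and convexity argument underlying Theorem~\ref{mainThm}.
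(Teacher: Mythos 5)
Your proof is correct, but it takes a genuinely different route from the paper. The paper does not reduce the Shifted-Exponential case to the Exponential one at all: it appeals to the general result of \cite{behrouzi2018effect}, which establishes optimality of the balanced assignment of non-overlapping batches whenever the service time is a stochastically decreasing and convex random variable, and then simply observes that the Shifted-Exponential (like the Exponential) belongs to that class. You instead write $T_{ij}=\Delta+E_{ij}$ with $E_{ij}\sim\textup{Exp}(\mu)$ and exploit the pathwise identity $C_\pi^{\textup{SExp}}=\Delta+C_\pi^{\textup{Exp}}$. The structural property you flag as the main obstacle does hold in this model: a worker, once finished, stays finished; the master's ability to generate the overall result is a monotone criterion on the set of reporting workers; and the batch size (hence the common law of the $T_{ij}$) is held fixed across all policies being compared, so the coupling is legitimate and the additive constant $\Delta$ is policy-independent. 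As for what each approach buys: the paper's argument generalizes immediately to any stochastically decreasing and convex service-time distribution, whereas yours is confined to location shifts of a distribution for which optimality is already known. On the other hand, your argument is self-contained given Theorem~\ref{mainThm} exactly as stated, avoids having to verify the stochastic decreasingness/convexity property for $\textup{SExp}(\Delta,\mu)$, and proves something strictly stronger than equality of expectations: the completion-time distribution under every policy is just translated by $\Delta$, so the same policy is simultaneously optimal for the variance, the quantiles, and any other law-determined objective that is invariant under a common shift.
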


It is proved in the original work \cite{behrouzi2018effect} that, the balanced assignment of non-overlapping batches minimized the expected completion time if the workers' service time is stochastically decreasing and convex random variable. Both Exponential and Shifted-Exponential random variables are stochastically decreasing (in the sense of usual stochastic ordering) and convex random variables. Therefore, for both distributions, assigning each data batch to $N/B$ workers minimizes the expected completion time.

We next study the effect of the level of redundancy on the completion time in System\ref{fig:sysModel1}, for the optimal data distribution for a given batch size. The level of redundancy can vary from full diversity, by assigning the entire data set to every worker, to full parallelism, by dividing the data set evenly among workers. We call this range the \textit{diversity-parallelism spectrum} and try to find the optimum operating point within this spectrum. To this end, we model the service time of each data sample $\tau$ with Exponential and Shifted-Exponential distributions. Further, to find the service time distribution of each batch, we use the size-dependent service time model proposed in \cite{gardner2016better}. In addition to the average, we also study the effect of the redundancy level on the variance of completion time. It is worth mentioning that the variance is considered to be even more important practical systems than the average job completion time, since it enables performance guarantees \cite{dean2013tail}.

\begin{theorem}
With Exponential service time distribution, $\tau\sim Exp(\mu)$, both the expected value and the variance of completion time is minimized in full diversity, i.e. $B=1$.
\end{theorem}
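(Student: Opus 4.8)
The plan is to translate the question of the optimal redundancy level into a clean order-statistics problem and then minimize over the number of batches $B$. First I would fix the assignment policy to the one already known to be optimal for a given batch size, namely the balanced assignment of $B$ non-overlapping batches established in Theorem~\ref{mainThm}, so that the $N$ workers split into $B$ disjoint groups of $N/B$, each group computing one batch. Because the master must recover the partial result of every batch (it sums them to obtain $f(\mathbb{D})$) but needs only one of the $N/B$ redundant copies of each batch, the completion time is the max--min statistic
\[
T(B) = \max_{1\le i\le B}\ \min_{1\le j\le N/B} T_{ij}.
\]

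Next I would insert the size-dependent service-time model. Each batch holds a $1/B$ fraction of $\mathbb{D}$, so under the scaled Exponential model its service time is Exponential with rate proportional to $B$; absorbing constants, $T_{ij}\sim\mathrm{Exp}(B\mu)$. The crucial observation is that the inner minimum collapses: the minimum of $N/B$ i.i.d.\ $\mathrm{Exp}(B\mu)$ variables is $\mathrm{Exp}\!\left(\tfrac{N}{B}\cdot B\mu\right)=\mathrm{Exp}(N\mu)$, independent of $B$. Since the batches are disjoint and the worker groups are disjoint, the $B$ inner minima are i.i.d., so $T(B)$ is distributed as the maximum of $B$ i.i.d.\ $\mathrm{Exp}(N\mu)$ random variables. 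This reduction is the heart of the argument: the added redundancy exactly cancels the slow-down from larger batches, leaving only the batch count $B$ as a free parameter.

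Finally I would compute the two moments using the memoryless decomposition of the maximum of exponentials: the maximum of $B$ i.i.d.\ $\mathrm{Exp}(N\mu)$ variables equals in distribution $\sum_{k=1}^{B} E_k$ with independent $E_k\sim\mathrm{Exp}(kN\mu)$. Hence
\[
\mathbb{E}[T(B)] = \frac{1}{N\mu}\sum_{k=1}^{B}\frac{1}{k},
\qquad
\mathrm{Var}[T(B)] = \frac{1}{(N\mu)^2}\sum_{k=1}^{B}\frac{1}{k^2}.
\]
Both are sums of strictly positive terms, hence strictly increasing in $B$ over the admissible range, so each is minimized at the smallest value $B=1$, i.e.\ full diversity, which proves the claim for both metrics at once.

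The steps needing the most care, rather than the arithmetic, are the two structural claims feeding the reduction: that the optimal policy really yields the $\max$--$\min$ form (this combines Theorem~\ref{mainThm} with the observation that one copy per batch suffices and that disjoint batches give independent inner minima), and that the size-dependent model produces a rate exactly linear in $B$, so the inner minima become identically $\mathrm{Exp}(N\mu)$. Once the reduction to a maximum of $B$ i.i.d.\ exponentials is justified, the monotonicity of the harmonic-type sums $\sum 1/k$ and $\sum 1/k^2$ makes both conclusions immediate, and I expect no further subtlety there.
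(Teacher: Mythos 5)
Your argument is correct and matches the approach the paper intends (the paper omits the proof, but its companion formula $\frac{N\Delta}{B}+\frac{1}{\mu}H_B$ for the Shifted-Exponential case confirms the same max--min reduction to the maximum of $B$ i.i.d.\ exponentials whose rate is independent of $B$, followed by the R\'enyi decomposition giving $H_B$ and $\sum_{k\le B}1/k^2$). The only quibble is a normalization constant: to be consistent with the paper's stated formula the batch rate should be $B\mu/N$ rather than $B\mu$, so the inner minimum is $\mathrm{Exp}(\mu)$ rather than $\mathrm{Exp}(N\mu)$, which does not affect the monotonicity in $B$ or the conclusion that $B=1$ is optimal for both moments.
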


In the case of Exponential service time distribution of data samples, assignment of the entire data set to every worker could minimize the expected value and the variance of completion time. However, this may not be feasible in practice, for several reasons including, limited storage at workers or unexpected slowdowns due to overloading the entire system. In this case, our analysis suggest that the higher diversity is always beneficial for performance. Hence, with Exponential service time distribution of data samples, the highest possible diversity is optimal.

\begin{theorem}
With Shifted-Exponential service time distribution, $\tau\sim SExp(\Delta,\mu)$, the optimum operating point in the diversity-parallelism spectrum, from the expected completion time point of view, is the solution of the following discrete unconstrained optimization problem,
    \begin{equation}
        \underset{B\in F_B}{min}\qquad \frac{N\Delta}{B}+\frac{1}{\mu}H_B,
        \label{sexpExp}
    \end{equation}
where $F_B$ is the set of all feasible values for $B$. 
\end{theorem}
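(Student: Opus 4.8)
The plan is to derive a closed-form expression for the expected completion time under the balanced non-overlapping assignment (which Theorem~\ref{mainThm} and the Corollary already certify as optimal for a fixed batch count) and then to observe that the stated discrete program is nothing more than the minimization of this expression over the admissible redundancy levels. First I would fix the structural picture: with $B$ non-overlapping batches and $B\mid N$, the balanced policy partitions the $N$ workers into $B$ groups of size $N/B$, assigning one distinct batch of size $N/B$ to each group. Because result generation requires one partial sum per batch (all $B$ terms of the partial-sum decomposition), the job finishes once, within every group, at least one worker has completed. Hence the completion time is the maximum over the $B$ groups of the minimum finishing time inside each group, i.e. a max-of-min structure.

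Next I would feed in the size-dependent service model: a batch containing $s=N/B$ samples, each sample carrying parameters $(\Delta,\mu)$, has per-worker service time distributed as $\mathrm{SExp}(s\Delta,\mu/s)$, i.e. shift $N\Delta/B$ and rate $\mu B/N$. The two distributional facts I would invoke are (i) the minimum of $r$ i.i.d.\ shifted exponentials $\mathrm{SExp}(\delta,\lambda)$ is again $\mathrm{SExp}(\delta,r\lambda)$, and (ii) a common deterministic shift passes straight through both $\min$ and $\max$. Applying (i) with $r=N/B$ and $\lambda=\mu B/N$ collapses each group's minimum to a single variable $Y_i\sim\mathrm{SExp}(N\Delta/B,\mu)$, the group rate telescoping exactly back to $\mu$. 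Since the $B$ groups use disjoint worker sets, the $Y_i$ are independent and identically distributed.

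Writing $Y_i=N\Delta/B+Z_i$ with $Z_i\sim\mathrm{Exp}(\mu)$ i.i.d., fact (ii) gives $\max_i Y_i = N\Delta/B + \max_i Z_i$. Taking expectations and using the classical value $\mathbb{E}[\max_{i\le B}Z_i]=H_B/\mu$ for the maximum of $B$ rate-$\mu$ exponentials yields
\begin{equation*}
\mathbb{E}[T]=\frac{N\Delta}{B}+\frac{1}{\mu}H_B.
\end{equation*}
As every feasible redundancy level corresponds to an admissible $B$, minimizing this over $B\in F_B$ is precisely the program~(\ref{sexpExp}), completing the argument. A quick sanity check against the Exponential case ($\Delta=0$) recovers $\mathbb{E}[T]=H_B/\mu$, which is increasing in $B$ and hence minimized at $B=1$, consistent with the full-diversity optimum asserted earlier.

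On where the effort concentrates: the combinatorics of the balanced assignment and the max-of-min reduction are routine, and the expectation of the maximum of exponentials is standard. The two places demanding care are getting the size-dependent scaling of shift and rate exactly right, so that the group-minimum rate returns to $\mu$ while the shift accumulates to $N\Delta/B$, and justifying the independence of the group minima, which rests on the non-overlapping batches being served by disjoint worker sets. I therefore expect the main obstacle to be purely the modeling step of translating per-sample parameters into per-batch parameters consistently with \cite{gardner2016better}, since an incorrect convention there would break the clean cancellation that produces~(\ref{sexpExp}).
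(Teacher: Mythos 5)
Your derivation is correct and is evidently the intended one: the paper omits all proofs for space, but your max-of-min decomposition over disjoint worker groups, the size-dependent scaling to per-batch parameters $\mathrm{SExp}(N\Delta/B,\,\mu B/N)$, the closure of shifted exponentials under minima, and the identity $\mathbb{E}[\max_{i\le B} Z_i]=H_B/\mu$ reproduce the objective in (\ref{sexpExp}) exactly, and your $\Delta=0$ sanity check is consistent with the paper's Exponential-case theorem. The one point worth stating explicitly rather than in passing is the scaling convention for the batch service time, since (as you note) any other convention, e.g.\ leaving the rate at $\mu$ per batch, would yield $N\Delta/B+\tfrac{B}{N\mu}H_B$ instead and contradict the stated result; your choice is the one consistent with the theorem and with the size-dependent model the paper cites.
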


Form (\ref{sexpExp}), we see that there is an optimum value of $B$ which, for some set of parameters, is not at the either end of the diversity-parallelism spectrum. Similar dependency between the distribution's parameters and the efficient data placement was observed in coded storage systems \cite{joshi2012coding}. In Shifted-Exponential distribution, larger $\Delta\mu$ product means ``less randomness'' in the underlying random variable. With less randomness, it is more beneficial to spent resources to provide parallelism. Therefore, with larger $\Delta\mu$, higher parallelism is optimal. On the other hand, smaller $\Delta\mu$ means higher randomness and spending resources to provide diversity benefits performance more. This behavior of the completion time is showed in Fig. \ref{fig:sexpExpected}.

\begin{figure}[htbp]
   \centering
   \includegraphics[width=.9\columnwidth, keepaspectratio]{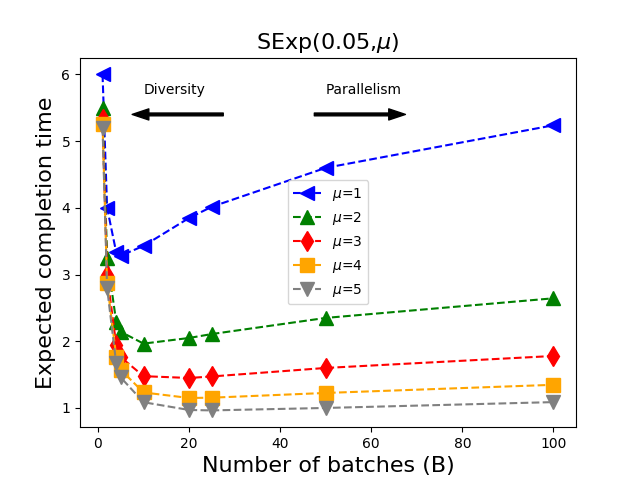}
   \caption{Expected completion time with Shifted-Exponential service time distribution of the data samples versus the number of batches, for different values of $\lambda$. With larger $\Delta\mu$ product higher diversity is optimal.}
   \label{fig:sexpExpected}
\end{figure}

\begin{theorem}
With Shifted Exponential service time distribution, $\tau\sim SExp(\Delta,\mu)$, the optimum operating point in the  diversity-parallelism spectrum, from the variance of completion time point of view, is minimized in full diversity.
\label{thm:sexpVarPred}
\end{theorem}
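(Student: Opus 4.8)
The plan is to obtain an explicit expression for $\operatorname{Var}(T)$ as a function of $B$ and then settle the problem by a monotonicity argument. First I would identify the law of the completion time under the balanced non-overlapping policy that the earlier results already certify as optimal. With $B$ batches, each batch is stored on $N/B$ workers, and under the size-dependent service model a batch of $N/B$ samples has service time $\textup{SExp}(\tfrac{N}{B}\Delta,\tfrac{B}{N}\mu)$ — the deterministic shift growing and the exponential rate shrinking in proportion to the batch size. Since the master must receive one completed copy of each of the $B$ disjoint batches, the completion time is the max--min
\[ T=\max_{1\le i\le B}\ \min_{1\le j\le N/B} T_{ij}, \]
where $T_{ij}$ is the service time of the $j$th worker assigned to batch $i$.

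Next I would collapse the inner minimum. Using the fact that the minimum of $n$ i.i.d.\ $\textup{SExp}(\delta,\lambda)$ variables is again $\textup{SExp}(\delta,n\lambda)$, the fastest of the $N/B$ replicas of batch $i$ is distributed as $\textup{SExp}(\tfrac{N}{B}\Delta,\mu)$; the decisive point is that replication exactly cancels the batch-size inflation of the rate, pinning it at $\mu$ for \emph{every} $B$. These $B$ per-batch minima are i.i.d.\ because the non-overlapping batches are disjoint and the $T_{ij}$ are i.i.d. The shift $\tfrac{N}{B}\Delta$ is deterministic and therefore drops out of the variance, so $\operatorname{Var}(T)=\operatorname{Var}\!\big(\max_i E_i\big)$ with $E_1,\dots,E_B$ i.i.d.\ $\textup{Exp}(\mu)$.

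I would then evaluate this variance through the spacing representation of exponential order statistics: the successive gaps of $\max_i E_i$ are independent with $\textup{Exp}((B-k+1)\mu)$ laws, and summing their variances gives
\[ \operatorname{Var}(T)=\frac{1}{\mu^2}\sum_{m=1}^{B}\frac{1}{m^2}=\frac{1}{\mu^2}H_B^{(2)}, \]
where $H_B^{(2)}=\sum_{m=1}^{B}1/m^2$ is the order-$2$ generalized harmonic number. Because $H_B^{(2)}$ is strictly increasing in $B$, $\operatorname{Var}(T)$ is minimized at the smallest feasible value $B=1$, i.e.\ full diversity, attaining the minimum $1/\mu^2$, which is precisely the claim.

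The hard part is concentrated in the second step: recognizing that taking the minimum over the $N/B$ replicas wipes out the size-dependent rate and fixes it at $\mu$ irrespective of $B$, so that the entire $B$-dependence of the variance is funneled into $H_B^{(2)}$. Once this is in hand the conclusion is a one-line monotonicity observation. The only technical care required is to justify the independence of the per-batch minima (which follows from disjointness of the non-overlapping batches) and to invoke the spacing decomposition correctly when computing the variance of the maximum.
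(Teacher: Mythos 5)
Your proof is correct, and since the paper explicitly omits all proofs for space, there is no written argument to compare it against --- only the heuristic remark following the theorem. Your derivation is the natural one and is consistent with the paper's own model: the batch service time $\textup{SExp}\bigl(\tfrac{N}{B}\Delta,\tfrac{B}{N}\mu\bigr)$, the cancellation of the size-dependent rate under the inner minimum so that each per-batch minimum is $\textup{SExp}\bigl(\tfrac{N}{B}\Delta,\mu\bigr)$, and the resulting mean $\tfrac{N\Delta}{B}+\tfrac{1}{\mu}H_B$ exactly reproduce the paper's equation (4), which confirms you reconstructed the intended model. Dropping the deterministic shift and applying the spacing decomposition to get $\operatorname{Var}(T)=H_B^{(2)}/\mu^2$, increasing in $B$, is a complete and rigorous argument. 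One observation worth recording: your calculation shows that the variance of each per-batch minimum is $1/\mu^2$ \emph{independently of} $B$ (replication exactly offsets the batch-size inflation of the rate), so the entire effect comes from taking a maximum over fewer i.i.d.\ terms as $B$ decreases. This is a sharper and slightly different mechanism than the paper's informal explanation, which attributes the gain to ``reduced variability of the minimum''; your version is the one that actually survives the computation, and it also explains why the answer here (always $B=1$) differs from the mean criterion of Theorem~3, where the shift term $\tfrac{N\Delta}{B}$ does not drop out and creates the trade-off.
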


Achieving minimum variance at full diversity is in line with intuition, since the completion time in System\ref{fig:sysModel1} is a function of $B$ random variables, each being the minimum of $N/B$ workers' service time. By increasing diversity, i.e., decreasing B, the number of random variables from which we select a minimum increases, which in turn reduces the variability of the minimum value.

\section{conclusion}
In a distributed computing system with stragglers, optimum data replication policy was studied. The effect of the level of redundancy, ranging from full diversity to full parallelism, was studied for Exponential and Shifted-Exponential service time distribution of data samples. While with Exponential distribution full diversity is optimal, with Shifted-Exponential distribution the optimal operating point in diversity-parallelism spectrum depends of the distribution's parameters.

\section*{Acknowledgement}
Part of this research is based upon work supported by the NSF grants No.\ CIF-1717314 and CCF-1559855.

\bibliographystyle{IEEEtran}
\bibliography{ref}
\end{document}